\newcommand{\naturals}{\mathbb{N}}
\newcommand{\real}{\mathbb{R}}
\newcommand{\realnonneg}{\mathbb{R}_{\ge 0}}
\newcommand{\realpos}{\mathbb{R}_{> 0}}
\newcommand{\map}[3]{#1:#2 \rightarrow #3}
\newcommand{\longthmtitle}[1]{\mbox{}{\textit{(#1):}}}
\newcommand{\setdefb}[2]{\big\{#1 \; | \; #2\big\}}
\newcommand*{\SetSuchThat}[1][]{} 
\newcommand*{\MvertSets}{%
    \renewcommand*\SetSuchThat[1][]{%
        \mathclose{}%
        \nonscript\;##1\vert\penalty\relpenalty\nonscript\;%
        \mathopen{}%
    }%
}
\DeclarePairedDelimiterX \Set [2] {\lbrace}{\rbrace}
    {\,#1\SetSuchThat[\delimsize]#2\,}
\newcommand{\Ac}{\mathcal{A}}
\newcommand{\Cc}{\mathcal{C}}
\newcommand{\Dc}{\mathcal{D}}
\newcommand{\Sc}{\mathcal{S}}
\newcommand{\Xc}{\mathcal{X}}
\newcommand{\mydelta}{\textbf{\textdelta}}
\newtheorem{proposition}{Proposition}
\theoremstyle{definition}
\newtheorem{definition}{Definition}
\newtheorem{remark}{Remark}
\newtheorem{assumption}{Assumption}
\newtheorem{problem}{Problem}
\newcommand{\sgn}{{\operatorname{sgn}}}
\newcommand{\tgt}{{\operatorname{tgt}}}
\newcommand{\Lie}{\mathcal{L}}
\newcommand{\Exp}{\mathbb{E}}
\DeclareMathOperator*{\argmax}{argmax}
\title{Hierarchical Event-Triggered Systems:\\ Safe Learning of Quasi-Optimal Deadline Policies}
\author{Pio Ong, Manuel Mazo Jr., and  Aaron D. Ames
\thanks{This research is supported in part by TII under project \#A6847.}
  \thanks{Pio Ong and Aaron D. Ames are with the Department of Mechanical and Civil Engineering, California Institute of Technology, Pasadena, CA 91125, USA. {\tt\small \{pioong,ames\}@caltech.edu}}%
  \thanks{Manuel Mazo Jr. is with the Delft Center for Systems and Control, Faculty of Mechanical Engineering, Delft University of Technology, The Netherlands. {\tt\small m.mazo@tudelft.nl}}
}
\begin{document}

\maketitle

\begin{abstract}
    We present a hierarchical architecture to improve the efficiency of event-triggered control (ETC) in reducing resource consumption. This paper considers event-triggered systems generally as an impulsive control system in which the objective is to minimize the number of impulses. Our architecture recognizes that traditional ETC is a greedy strategy towards optimizing average inter-event times and introduces the idea of a deadline policy for the optimization of long-term discounted inter-event times. A lower layer is designed employing event-triggered control to guarantee the satisfaction of control objectives, while a higher layer implements a deadline policy designed with reinforcement learning to improve the discounted inter-event time. We apply this scheme to the control of an orbiting spacecraft, showing superior performance in terms of actuation frequency reduction with respect to a standard (one-layer) ETC while maintaining safety guarantees.
\end{abstract}

\section{Introduction}

A common goal in the development of networked control systems is the reduction of communications bandwidth needed to close the control loop. Similarly, in other applications it is desirable to reduce the amount of actuation changes---to prevent actuators wear or reduce limited actuation resources use. An example of the latter can be found in space applications where fuel availability is limited. The study of how to reduce feedback updates, communication and/or actuation, in control systems has received a lot of attention in the past couple of decades. Much of the work has focused on the minimum data-rates necessary to stabilize and guarantee performance of control loops, see, e.g.~\cite{DL-JPH:05, ST-SM:04, GNN-FF-SZ-RJE:07}. Alternatively, methods that deviate from the periodic control paradigm have been considered~\cite{KJA-BMB:02, MV-MF-PM:03}. In particular, Lyapunov based event-triggered control (ETC) and self-triggered control (STC), pioneered in~\cite{PT:07, AA-PT:10}, emerged as paradigms attracting much attention from the control community. 

The basic idea behind ETC is to monitor (a proxy for) the evolution of certificate functions, e.g., Lyapunov or barrier functions certifying stability/safety, and performance of the control system to determine when feedback updates are required. In STC, rather than continuously monitoring the certificates, the approach is to predict its evolution based on the last received measurements and determine beforehand when controller updates will be required. For a more detailed introduction to these topics see~\cite{WPMHH-KHJ-PT:12}. While the initial focus of this line of research was on stabilization, recent extensions have extended the ETC principles to guarantee safety~\cite{PO-JC:24-tac, AJT-PO-JC-AA:21-csl, PO-ADA:23-cdc} employing barrier functions. Two major challenges faced by aperiodic control paradigms, as ETC and STC, are their scheduling and the analysis of feedback usage. Both of these issues are linked to predicting the possible inter-sampling patterns exhibited by such systems~\cite{ASK-MM:16,GG-MM:23, GD-MM:22}.

In~\cite{GG-MM:21} the observation was made that ETC, as well as most STC strategies, are greedy optimizers towards the goal of minimizing average feedback resources' usage. 
Analytically predicting and optimizing aperiodic sampling patters for control is an extremely challenging problem.
The authors of~\cite{GG-MM:21} circumvent the problem by proposing a STC strategy for LTI systems, based on symbolic abstractions and formal synthesis, resulting in aperiodic implementations that optimize the average inter-event times. Similarly, following a more heuristic approach~\cite{PO-ADA:23-cdc} aimed also at optimizing inter-sample times in the context of safe satellite control.

Hierarchical, also called layered, approaches are a common approach in control engineering to address complex problems \cite{NM-ADA-JCD:24}. The maxim applied is that of divide-and-conquer, splitting a complex task into smaller/easier to address tasks at different levels of abstraction. This approach appears in biological systems~\cite{JM-MB-GW:19}, and has been extensively applied in robotics~\cite{LS-OK:05, JK-RTF-VRK-ADA-KAH:23}, often also taking advantage of the layering to produce multi-rate implementations~\cite{UR-AS-ADA:22}.
A remaining problem in ETC is how to optimize long-term metrics of resource usage, usually as a function of inter-event times, while retaining performance and safety guarantees, particularly when the dynamics are not linear.

In this paper, we draw inspiration from hierarchical structures, and the preliminary work in~\cite{RR:22}, to propose a layered approach to ETC for general nonlinear systems that optimizes long-term metrics of resources' use while providing strong safety/performance guarantees. The proposed architecture relies on a lower layer providing hard guarantees via well established ETC designs, combined with a supervisory control layer on top that optimize the discounted inter-event times. In particular, we build on Q-learning~\cite{PD-CJCHW:92} and propose a faster algorithm (by exploiting the structure of our problem) in the top layer to learn how to prescribe deadlines for the lower layer that enforce feedback updates possibly earlier than the lower layer ETC mechanism prescribes. The proposed architecture thereby enables \textit{safe learning} of strategies, including online optimization during safe operation.  We demonstrate the effectiveness of our approach on the application domain of safe control of orbiting spacecraft.

\section{Event-triggered Control Preliminaries}
We begin by reviewing key concepts in event-triggered control and establish the motivation for the problem this paper addresses.

Event-triggered control (ETC) is a controller implementation strategy towards resource conservation. To illustrate the main ideas behind ETC, we review its use in the setting of sample-and-hold controller implementations. Consider a sample-and-hold control system\footnote{Throughout this paper, we use the notation $\real$, $\realnonneg$, $\realpos$ for the set of real, nonnegative real, and positive real, respectively. For a vector $x\in\real^n$, $\|x\|$ denotes its Euclidean norm. We use $\Lie_fh = \left.\frac{\partial h}{\partial x}\right|_xf(x)$ for the Lie derivative of the function $\map{h}{\real^n}{\real}$ along the vector field $\map{f}{\real^n}{\real^n}$.}:
\begin{subequations}\label{sys:sample-and-hold}
    \begin{align}
    \begin{bmatrix}
        \dot \eta \\ \dot u
    \end{bmatrix} &= \begin{bmatrix}F(\eta,u)+d\\ 0 \end{bmatrix}\\
     \label{sys:sample-and-hold_jump}
    \begin{bmatrix}
        \eta^+ \\ u^+
    \end{bmatrix} &= \begin{bmatrix}
        \eta \\ u
    \end{bmatrix} + \underbrace{\begin{bmatrix}
        0 \\ \kappa(\eta)-u
    \end{bmatrix}}_{=\Delta u},~ t\in\{t_i\}_{i\in\naturals}^\infty
    \end{align}
\end{subequations}
with system state $\eta(t)\in\real^q$ and control input variable $u(t)\in\real^m$. The state evolves along the vector field $\map{F}{\real^q \times \real^m}{\real^q}$, subject to a disturbance $d(t)\in\real^q$. The motivation for sample-and-hold implementations is that the state-feedback $u(t)=\kappa(x(t))$ with a controller $\map{\kappa}{\real^q}{\real^m}$ cannot be updated in a continuous manner on a digital platform. Rather, common practice involves sampling the values of the controller periodically at different time instants $\{t_i\}^\infty_{i\in\naturals}$ at a high enough frequency to ensure that the sample-and-hold system behaves similarly to the one with continuous feedback. In such implementations, the jump map \eqref{sys:sample-and-hold_jump} occurs at regular intervals.

In many settings, it is undesirable to make frequent changes to the control, e.g., due to communication constraints, or actuation limitations. To this end, ETC arises as an alternative approach for controller sampling. ETC aims at reducing how often the controller is sampled and updated -- how often the jump map~\eqref{sys:sample-and-hold_jump} is executed -- while ensuring desirable behaviors, e.g. stability and safety. The approach consists in monitoring the system behavior and only updating the controller when necessary. In the case of stabilization problems without any disturbance ($d\equiv 0)$, for example, ETC sampling schemes aim at guaranteeing a Lyapunov function $\map{W}{\real^n}{\real}$ decreases along trajectories by determining the update times according to:
\begin{equation}
    t_{i+1} = \min\setdefb{t\geq t_i}{\Lie_FW(\eta(t),u(t))\geq 0},
\end{equation}
which enforces a controller update whenever the condition  $\Lie_FW(\eta(t),u(t))<0$ with $u(t) = \kappa(\eta(t_i))$ no longer holds. The ETC sampling scheme builds on holding each computed control input for as long as possible. In this sense, the scheme is a greedy approach towards the goal of avoiding overspending resources in control updates. Many works in the literature have demonstrated in practice that this heuristic approach reduces the resource usage in comparison to periodic sampling. Nevertheless, the shift to an aperiodic sampling raises the possibility of Zeno behavior -- infinite sampling instances within a finite period of time. As such, the drawback of ETC is that it requires additional analysis to establish a minimum inter-event time (MIET) in order to ensure it can be implemented in practice. 

\section{Problem description}
In this paper, we model event-triggered systems more generally as an impulsive control system with the flow:
\begin{subequations}
    \label{sys:ET}
    \begin{equation}
        \dot x = f(x)+d,
    \end{equation}
where $x(t)\in\mathcal{X}\subseteq\real^n$ is the system state. Here we take an impulsive modeling perspective of the event-triggered system. In this setting, the control input does not appear explicitly in the flow\footnote{For sample-and-hold systems~\eqref{sys:sample-and-hold}, the continuous control input~$u$ is included in the state $x=(\eta,u)$. Note $u$ does not depend on time $t$ but only on time instances $\{t_i\}_{i=0}^\infty$ in this formulation because $\dot u=0$}. Instead, we consider the effect of the controller to be completely captured in the jump dynamics: 
    \begin{align}
        &x(t_i^+) = g(x(t_i),v(t_i)),~ t\in\{t_i\}_{i\in\naturals}^\infty \label{sys:ET-jump}
    \end{align}
\end{subequations}
where $v(t)\in\mathcal{V}\subseteq\real^p$ represents an impulsive input variable that can instantly influence the state $x$ through the jump map~$\map{g}{\mathcal{X}\times\mathcal{V}}{\mathcal{X}}$. We are especially interested in scenarios where the actuation of $v$ implies the use of some scarce resource, which we would like to minimize. As a result, we wish the time instances $\{t_i\}_{i\in\naturals}^\infty$ to be as \textit{sparse} as possible. To this end, we assume the time sequence is determined iteratively using a triggering condition:
\begin{equation}\label{eq:trigger}
    t_{i+1} = \min\{\min\setdefb{t\geq t_i}{\Xi(x(t),x(t_i))\geq 0}, t_i+\delta_{\max} \},
\end{equation}
with an \textit{objective-based triggering condition}~$\map{\Xi}{\mathcal{X}\times\mathcal{X}}{\real}$. 
In~\eqref{eq:trigger}, the trigger deadline $\delta_{\max}$ denotes a maximum allowable time between events, sometimes called the system \textit{heartbeat}, used to enforce that a minimum amount of feedback is always present.
\begin{definition}\longthmtitle{Objective-based Triggering Condition}
    A triggering condition $\Xi$ is objective-based if its value along the trajectory being negative, $\Xi(x(t),x(t_i))<0$, at all time implies that the system control objectives, e.g., stability and safety, are met.\hfill~$\bullet$
\end{definition}
Note that the triggering condition could be generalized to be time-dependent and potentially depend on all the past states~$x$ and past impulsive inputs $v$. Here we simplify $\Xi$ to the form most commonly found in the literature. In addition, the formulation of the system thus far is a generalization of the sample-and-hold problem discussed earlier where the objective-based triggering condition is based on a Lyapunov condition for stability. 

Event-triggered control generally yields an aperiodic time sequence $\{t_i\}_{i\in\naturals}^\infty$. This makes it difficult to assess the sparsity of the time instances along trajectories and their variability across different initial conditions. In order to simplify the problem and make the presentation cleaner, we take the following assumption. 
\begin{assumption}\longthmtitle{Deterministic System}\label{assump:deterministic}
A state-feedback control policy $\map{\Pi}{\Xc}{\mathcal{V}}$ determining the control impulses $v=\Pi(x)$ is given. Furthermore, the disturbance $x\mapsto d(x)$ stems from model uncertainty and is state-dependent.~\hfill$\bullet$   
\end{assumption}

This assumption can be relaxed, and stochastic noise can be considered. The assumption allows us to consider inter-event times as being fully determined by the state $x(t_i)$ at the last actuation instance, i.e., there exists a (possibly not analytic) function $\map{\tau_\Xi}{\Xc}{\realnonneg}$ such that
$
    \tau_\Xi(x(t_i)) =  t_{i+1}-t_i.
$
This defines the minimum inter-event time (MIET):
\begin{equation}
\tau^*_\Xi = \inf_{x\in\real^n}\tau(x),
\end{equation}
which is the smallest time between two consecutive jump instances, from any possible system trajectory using triggering condition $\Xi$. Most works on ETC provide a MIET in order to rule out Zeno behavior.

\begin{assumption}\longthmtitle{Positive Minimum Inter-Event Time}\label{assump:MIET}
    Consider system~\eqref{sys:ET} using an objective-based triggering condition~$\Xi$ to determine iteratively the time instances $\{t_i\}_{i\in\naturals}^\infty$ as in \eqref{eq:trigger}. We assume the MIET $\tau^*_\Xi>0$ is positive.~\hfill$\bullet$ 
\end{assumption}

While the existence of a strictly positive MIET is useful to ensure the practicality of implementations, it is an insufficient metric to measure the efficiency of the implementation with respect to use of resources, in the sense that the MIET is a very myopic, worst-case, metric. A better metric for measuring resource consumption should measure long-term distributions of inter-event times. An example of such a metric is the average inter-event time (AIET), which, for a given initial condition $x_0$,  can be computed for the corresponding trajectory as:
\begin{equation}
    \bar \tau(x_0) 
    =\liminf_{s\rightarrow \infty} \frac{1}{s+1}\sum_{i=0}^s \tau_\Xi(x(t_i)).
\end{equation}
Although AIET is a good metric for resource consumption over time, it is more practical to place higher importance to earlier time due to the prediction inaccuracies that may grow over time from system model imperfection, always present in real applications. 
An alternative metric, that allows us to control the  weight given to transient sampling patterns, is the discounted inter-event time (DIET):
\begin{equation}
    \bar \tau_\gamma(x_0) =\liminf_{s\rightarrow \infty} \sum_{i=0}^s \gamma^i\tau_\Xi(x(t_i)).
\end{equation}
where $\gamma \in (0,1)$ is a discount factor. In this paper we consider DIET as the metric to optimize for the additional reason that it facilitates the implementation of learning approaches, in particular Q-learning, for its optimization. Nonetheless, this can be extended employing available alternatives~\cite{SM:96} for the optimization of average costs as the one defined by the AIET. The goal of our paper is to find strategies that maximize the DIET.
\begin{problem}\longthmtitle{Discounted Inter-Event Time Problem}\label{problem:diet}
    Consider the impulsive system~\eqref{sys:ET}. 
    Design an event-triggered control strategy (an objective-based triggering condition $\Xi$ and a deadline $\delta_{\max}$) to maximize the DIET function~$\bar\tau_\gamma(x_0)$ for all $x_0$  while maintaining $\Xi(\cdot)<0$ at all time.~\hfill$\bullet$
\end{problem}
Notice that our starting point in the problem is a given objective-based triggering condition. Theoretically, there may exist a triggering condition that can directly achieve our goals, but even if it exists, constructing such condition is extremely difficult. 
This paper provides a way to modify an existing triggering condition to improve the DIET metric through using a hierarchical framework. 

\section{A Hierarchical Architecture}
\begin{figure}
    \centering
    \includegraphics[width=0.8\columnwidth]{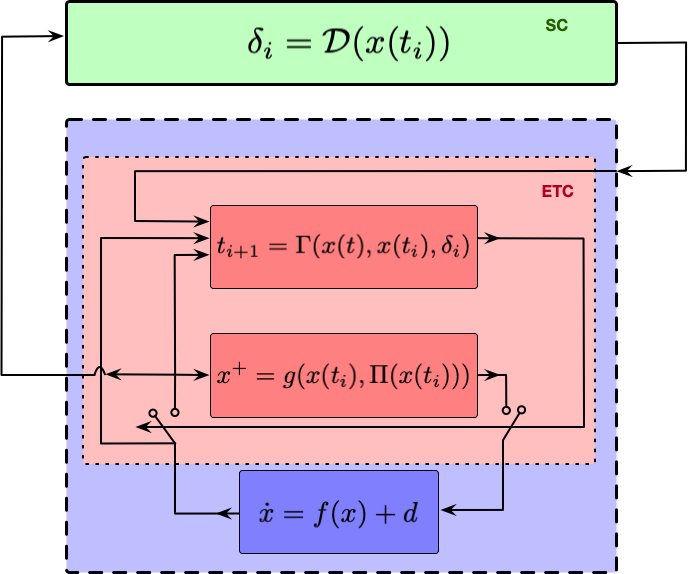}
    \caption{Schematic description of the layered ETC architecture}
    \label{fig:enter-label}
\end{figure}

We propose a two-layer hierarchical framework to simplify Problem~\ref{problem:diet} by decoupling the optimization of the DIET from satisfying the control objectives. 
The lower layer guarantees safety by leveraging the existing triggering condition to maintain safety of the control objectives. At the same time, we modify the ETC mechanism to include deadlines in order to reduce the greediness of the traditional ETC. The values of these deadlines are determined at the higher layer, and are adjusted each time a jump occurs. The higher layer aims to optimize the DIET by implementing a deadline selection policy for the event-triggered system.

\subsection{Safety layer - satisfying control objectives}
By definition, given a Zeno-free objective-based triggering condition $\Xi$, we can directly ensure the satisfaction of the control objectives by guaranteeing $\Xi$ remains negative across time. Standard ETC approaches extend inter-event times as much as possible for each immediate event, essentially implementing a greedy policy towards reducing the overall number of events. In general, there may exist a triggering condition $\Xi'$ that has a shorter immediate inter-event time $\tau_{\Xi'}(x)<\tau_\Xi(x)$, but whose DIET $\bar \tau_\gamma$ is larger because it leads the system towards better (with longer deadlines) triggering states $\{x(t_i)\}_{i\in\naturals}^{\infty}$. To ensure that the control objectives are maintained, the new condition $\Xi'$ must be dominated by the original condition $\Xi$. 
\begin{definition}\longthmtitle{Dominating Triggering Condition}\label{def:domination}
    A triggering condition $\Xi$ dominates a triggering condition $\Xi'$ if 
    $\Xi'(x,\bar x)<0 \implies \Xi(x,\bar x)<0$ for all $x,\bar x\in\real^n$.~\hfill$\bullet$
\end{definition}
The task of designing a dominated triggering condition $\Xi'$ that optimizes the DIET can generally be very challenging, so this paper takes a different approach to optimizing the DIET. Inspired by the idea of enforcing early triggering for periodic ETC, put forth by~\cite{RR:22}, we enrich the ETC scheme to include dynamic deadlines as:
\begin{align}\label{eq:trigger_deadline}
    t_{i+1} &= \min\{\min\setdefb{t\geq t_i}{\Xi(x(t),x(t_i))\geq 0}, t_i+ \delta_i\}\nonumber\\
    &:= \Gamma(x(t),x(t_i),\delta_i),
\end{align}    
where $\delta_i$ are the deadlines available to the upper layer as a control input to affect the triggering condition. The following result shows that we can recover any dominated triggering condition through the use of deadline policies $\map{\Dc}{\Xc}{\realnonneg}$ to determine these deadline inputs.
\begin{proposition}\longthmtitle{Equivalent Deadline Policy}
\label{prop:equiv}
    Consider the triggering conditions $\Xi$ and $\Xi'$ such that $\Xi$ dominates $\Xi'$. Then, there exists a deadline policy $\map{\Dc}{\Xc}{\realnonneg}$ such that the ETC scheme~\eqref{eq:trigger} using $\Xi'$ and the scheme~\eqref{eq:trigger_deadline} using $\Xi$ with the feedback $\delta_i=(\Dc(x(t_i))$ determine the same triggering times $\{t_i\}_{i=0}^\infty$.
\end{proposition}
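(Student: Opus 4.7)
The plan is to construct $\Dc$ explicitly so that it reproduces the inter-event times that scheme~\eqref{eq:trigger} with $\Xi'$ would generate. For each $x \in \Xc$, let $T_{\Xi'}(x) \defeq \inf\setdefb{s \geq 0}{\Xi'(\varphi(s;x), x) \geq 0}$, where $\varphi(\cdot;x)$ denotes the pure flow of~\eqref{sys:ET} starting at $x$ (with the convention $\inf \emptyset = +\infty$), and define $\Dc(x) \defeq \min\{T_{\Xi'}(x), \delta_{\max}\}$. Analogously, let $T_\Xi(x)$ be the corresponding first hitting time for $\Xi$.

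I would then argue by induction on the event index $i$ that the two schemes produce the same sequence $\{t_i\}_{i=0}^\infty$. The base case is immediate: both schemes share the initial time $t_0$ and the initial state. For the inductive step, assume $t_i$ and $x(t_i)$ coincide under the two schemes, and write $s_\Xi \defeq T_\Xi(x(t_i))$ and $s_{\Xi'} \defeq T_{\Xi'}(x(t_i))$. Scheme~\eqref{eq:trigger} using $\Xi'$ yields $t_{i+1} - t_i = \min\{s_{\Xi'}, \delta_{\max}\}$, while scheme~\eqref{eq:trigger_deadline} using $\Xi$ with the feedback $\delta_i = \Dc(x(t_i))$ yields $t_{i+1} - t_i = \min\{s_\Xi, s_{\Xi'}, \delta_{\max}\}$.

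The crucial step is to invoke Definition~\ref{def:domination}. Dominance of $\Xi$ over $\Xi'$ applied pointwise in time says that on the interval $[t_i, t_i + s_{\Xi'})$, where $\Xi'(\cdot, x(t_i)) < 0$, we also have $\Xi(\cdot, x(t_i)) < 0$, so the first hitting time of $\Xi$ cannot precede that of $\Xi'$. This gives $s_\Xi \geq s_{\Xi'}$, whence $\min\{s_\Xi, s_{\Xi'}, \delta_{\max}\} = \min\{s_{\Xi'}, \delta_{\max}\}$ and the two inter-event times agree. Assumption~\ref{assump:deterministic} (deterministic flow together with the state-feedback policy $\Pi$, and hence a deterministic jump map $g(\cdot, \Pi(\cdot))$) then forces $x(t_{i+1})$ to match as well, closing the induction.

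I do not foresee a substantive obstacle: the proposition is essentially a completeness statement asserting that the deadline input in~\eqref{eq:trigger_deadline} is expressive enough to simulate any dominated triggering condition. The only minor care is the corner case $T_{\Xi'}(x) = +\infty$, in which $\Xi'$ alone would never fire and the heartbeat saturates the inter-event time; this is handled uniformly by the outer $\min$ with $\delta_{\max}$ in the definition of $\Dc$.
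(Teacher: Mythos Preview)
Your proposal is correct and follows essentially the same approach as the paper: define the deadline policy to equal the inter-event time of $\Xi'$, use domination to conclude that the $\Xi$-condition in~\eqref{eq:trigger_deadline} cannot fire before this deadline, and hence the two schemes yield identical $\{t_i\}$. Your version is simply more explicit---you spell out the induction, invoke Assumption~\ref{assump:deterministic} to propagate agreement of the post-jump state, and handle the heartbeat saturation case $T_{\Xi'}(x)=+\infty$---whereas the paper compresses all of this into a couple of sentences using its notation $\tau_{\Xi'}$ (which already absorbs $\delta_{\max}$).
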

\begin{proof}
    By Definition~\ref{def:domination}, the condition $\Xi(x(t),x(t_i))\geq 0$ cannot be met if $\Xi'(x(t),x(t_i)) < 0$. Therefore, the inter-event time function is bounded as $\tau_{\Xi'}(x)<\tau_\Xi(x)$ for all $x\in \Xc$.
    Let the deadline policy $\Dc(x) := \tau_{\Xi'}(x)$ be the inter-event times of $\Xi'$. It follows that \eqref{eq:trigger_deadline} evaluates to $t_{i+1} = \tau_{\Xi'}(x)$. Hence the two ETC schemes produce the same sequence $\{t_i\}_{i\in\naturals}^\infty$, concluding the proof.
\end{proof}
Proposition~\ref{prop:equiv} suggests we can use the deadline policy as a proxy for obtaining the optimal triggering strategy. However, we need to rule out the possibility of a deadline policy generating an invalid triggering condition that does not enforces control objectives.  

\begin{proposition} \longthmtitle{Triggering with Deadlines Enforces Control Objectives}\label{prop:obj}
    Consider the ETC scheme~\eqref{eq:trigger_deadline} using a triggering condition $\Xi$ and any deadline policy $\map{\Dc}{\Xc}{[\delta_{\min},\delta_{\max}]}$ for feedback $\delta_i=\Dc(x(t_i))$ with $\delta_{\min} >0$. Then, there exists an equivalent ETC scheme in the form of~\eqref{eq:trigger} using a $\Xi'$ such that $\Xi$ dominates $\Xi'$. As a consequence, the trigger scheme enforces $\Xi(x(t),x(t_i))<0$ at all time.
\end{proposition}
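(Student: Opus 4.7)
The plan is to construct explicitly a triggering condition $\Xi'$ out of the given $\Xi$ and $\Dc$ such that (i) $\Xi$ dominates $\Xi'$ in the sense of Definition~\ref{def:domination}, and (ii) scheme~\eqref{eq:trigger} using $\Xi'$ produces the identical trigger sequence as scheme~\eqref{eq:trigger_deadline} using $\Xi$ and $\Dc$. Once (i) and (ii) hold, the safety conclusion is automatic: scheme~\eqref{eq:trigger} with $\Xi'$ enforces $\Xi'(x(t),x(t_i))<0$ on each inter-event window $[t_i,t_{i+1})$ by definition of the $\min$ in \eqref{eq:trigger}, and domination then gives $\Xi(x(t),x(t_i))<0$ pointwise.

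For the construction, I would first invoke Assumption~\ref{assump:deterministic}: between impulses the closed-loop flow is deterministic, so from any $\bar x \in \Xc$ the trajectory $t\mapsto x(t;\bar x)$ is a unique curve along which I can define the elapsed-time function
\begin{equation*}
T(x,\bar x) \defeq \inf\setdefb{t\geq 0}{x(t;\bar x)=x},
\end{equation*}
extended by any default value off the reachable set (this extension is immaterial since $\Xi'$ is only evaluated along actual trajectories). I would then set
\begin{equation*}
\Xi'(x,\bar x) \defeq \max\bigl\{\Xi(x,\bar x),\ T(x,\bar x)-\Dc(\bar x)\bigr\}.
\end{equation*}
Domination is then immediate: if $\Xi'(x,\bar x)<0$ then both maximands are negative, and in particular $\Xi(x,\bar x)<0$, which matches Definition~\ref{def:domination}.

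For the equivalence of triggering sequences, I would argue that along each inter-event window starting from $x(t_i)$ the function $T$ satisfies $T(x(t),x(t_i))=t-t_i$, so $\Xi'(x(t),x(t_i))\geq 0$ iff $\Xi(x(t),x(t_i))\geq 0$ or $t-t_i\geq \Dc(x(t_i))$. Combined with the outer heartbeat $\delta_{\max}$ in \eqref{eq:trigger}, which is non-binding because $\Dc(\cdot)\leq \delta_{\max}$, the minimum in \eqref{eq:trigger} reproduces $\Gamma(x(t),x(t_i),\Dc(x(t_i)))$ exactly, matching \eqref{eq:trigger_deadline}. Hence both schemes generate the same $\{t_i\}_{i\in\naturals}^\infty$.

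The main obstacle I expect is the well-definedness of $T$. If the closed-loop trajectory revisits a state inside an inter-event window, the infimum may under-report elapsed time and the deadline term in $\Xi'$ could fail to fire at a revisit, breaking equivalence in that pathological case. For generic deterministic ODE trajectories this does not occur on bounded intervals, but for full rigor one may either restrict attention to inter-event windows on which the flow is injective (guaranteed, for instance, when $\delta_{\max}$ is smaller than the minimal period of any closed orbit in $\Xc$) or augment the state with a clock variable so that elapsed time is trivially state-determined; the construction above then goes through verbatim.
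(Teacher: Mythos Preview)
Your argument is correct, and the overall strategy---build a dominated $\Xi'$ and then read off the safety conclusion---matches the paper's. The paper, however, is considerably more informal: it simply observes that the deadline scheme produces inter-event times $\min\{\tau_\Xi(x(t_i)),\Dc(x(t_i))\}$, labels this expression ``$\Xi'$'' without ever writing down a function with the required signature $\Xc\times\Xc\to\real$, and then argues directly a positive MIET of $\min\{\tau^*_\Xi,\delta_{\min}\}$ and safety (triggers cannot occur later than $\tau_\Xi(x(t_i))$, so $\Xi<0$ holds on each window). Your explicit construction via the elapsed-time map $T$ fills precisely the gap the paper glosses over, and your caveat about injectivity of the flow (with the clock-augmentation fix) is a legitimate technical point the paper does not raise. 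In short, yours is a more careful rendering of the same idea; the paper trades your explicit $\Xi'$ for a quicker but less rigorous sketch.
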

\begin{proof}
    The ETC scheme~\eqref{eq:trigger_deadline} with the deadline policy $\Dc$ determines the triggering time as:
    $$
    t_{i+1} = t_i + \min\{\tau_\Xi(x(t_i)),\Dc(x(t_i))\}.
    $$
    Therefore, the  ETC scheme~\eqref{eq:trigger} using he triggering condition $\Xi'(\cdot)=\min{\tau_\Xi(x(t_i)),\Dc(x(t_i))}$ is equivalent.  Then the trigger design does not exhibit Zeno behavior because it has a MIET $\min\{\tau^*,\delta_{\min} \}$ due to $\Dc(x)\geq \delta_{\min}$. Consequently, $\Xi(\cdot)<0$ at all time because the trigger can only occur before $\tau_\Xi(x(t_i))$.
\end{proof}

Note that, similarly to Assumption~\ref{assump:MIET}, we impose a restriction on the deadlines with a positive lower bound~$\delta_{\min}$ in order to restrict ourselves to non-Zeno trigger conditions with a MIET.  The implication of Proposition~\ref{prop:equiv} is that we decouple Problem~\ref{problem:diet} into designing a trigger to enforce control objectives and designing the deadline policy to optimize for the DIET. Regarding the former, the following remark discusses one of the main benefits of our framework.
\begin{remark}\longthmtitle{Greedy triggering conditions}
The effectiveness of our framework relies on a greedy nominal triggering condition. If the original triggering condition~\eqref{eq:trigger} enforces a large inter-event time, the optimal DIET value will improve due to an increase in available choices of effective deadlines~$\delta$. Therefore, our proposed framework incentivizes objective-based triggering conditions that extend each inter-event time as much as possible. To this end, the trigger design task is reduced to finding the least conservative condition that guarantees the control objective.
~\hfill$\bullet$
\end{remark}

\subsection{Optimization layer - improving DIET}
With the control objectives guaranteed by the lower layer, the event-triggered system can now be  viewed from the higher layer simply as a discrete-time control system with the deadline $\delta\in[\delta_{\min},\delta_{\max}]$ as an input:
\begin{equation}\label{eq:abstraction}
x_{i+1}  = G(x_i,\delta_i),
\end{equation}
where $\map{G}{\Xc\times [\delta_{\min},\delta_{\max}]}{\Xc}$ and $x_{i+1} = x(t_i)$. This system abstracts away the event-times $\{t_i\}_{i\in\naturals}^\infty$ in the lower layer. Nonetheless, the inter-event times can be exposed through a \textit{reward} associated to each discrete time-step:
\begin{equation}\label{eq:reward}
r(x,\delta):=\min\{\tau_\Xi(x),\delta\}.
\end{equation}
From the perspective of the optimization layer, we reformulate the optimization of DIET as an optimal control problem.
\begin{problem}\longthmtitle{Optimization layer optimal control problem}
    \label{problem:opt}
    Consider the discrete time system \eqref{eq:abstraction}-\eqref{eq:reward}. Find the optimal deadline policy $\map{\Dc^*}{\Xc}{[\delta_{\min},\delta_{\max}]}$ that maximizes the DIET $\bar \tau_\gamma(x_0) = \liminf_{s\rightarrow\infty}\sum_{i=0}^s\gamma^ir(x_i,\Dc^*(x_i))$ for all initial conditions $x_0\in\Xc$ .~\hfill$\bullet$
\end{problem}

The main obstruction to Problem~\ref{problem:opt} is the map $G$ being nonlinear and difficult to model. Obtaining the map $G$ capturing accurately the corresponding evolution of $x(t_i)$ at event-times is extremely challenging, particularly for general nonlinear systems. Symbolic abstractions have been proposed to this end for both linear~\cite{ASK-MM:16,GG-MM:21} and nonlinear~\cite{GD-MM:22} systems. However, these abstractions can be computationally demanding to construct and/or conservative, particularly in the case of nonlinear (disturbed) systems. Therefore, instead of constructing such a model and design a (model-based) optimal deadline control policy, we propose to employ a model-free alternative approach.

\subsection{Reinforcement learning for solving Problem~\ref{problem:opt}}
Reinforcement learning (approximately) solves optimal control problems by exploring different available actions and observing the obtained rewards to update the control policy. In our context, the reward the agent receives after executing an action is already clearly defined. In this paper, we restrict the presentation to the simple $Q$-learning method, which we review in what follows.

We start with a discretization of our state space $\Xc$ and action space $[\delta_{\min},\delta_{\max}]$ into the finite sets $\Sc$ and $\Ac$, respectively, so that:
$$
\mathbf{x}_{i+1} = \mathbf{G}(\mathbf{x}_i,\textbf{\textdelta}_i) + \mathbf{e}
$$
with new state $\mathbf{x}\in \Sc$, deadline $\mydelta\in \Ac$, map $\map{\mathbf{G}}{\Sc\times \Ac}{\Sc}$, and the (nondeterministic) error $\mathbf{e}$ due to this discretization. While technically this error is not stochastic, we treat it as if it would be and use this model to approximate the optimal deadline policy. Under this stochastic perspective, we denote by $P_{\mathbf{x}\mathbf{x}'}(\mydelta)$ the probability of $\mathbf{G}(\mathbf{x},\textbf{\textdelta}) + \mathbf{e}=\mathbf{x}'$, and analogously we denote by $R(\mathbf{x},\mydelta)$ the random variable associated to the observed reward $r(x,\mydelta)$, where $\mathbf{x}$ is the discretized state associated to the actual state $x$.

The $Q$-learning algorithm associates a \textit{value} function to each state $\mathbf{x}$ under a policy $\map{\tilde \Dc}{\Sc}{\Ac}$:
$$V^{\tilde\Dc}:(\mathbf{x}):=\Exp[R(\mathbf{x},\tilde\Dc(\mathbf{x}))] + \gamma \sum_{\mathbf{x}'\in\Ac}P_{\mathbf{x}\mathbf{x}'}(\tilde\Dc(\mathbf{x})) V^{\tilde\Dc}(\mathbf{x}'),$$
i.e., the sum of an expected instantaneous reward plus the expected discounted value of the next state. 
The optimal value function $V^*$ represents the optimal (expected) discounted sum of rewards from each state.  By the Bellman principle of optimality, $V^*$ satisfies: $V^*(\mathbf{x})=\max_{\mydelta\in\Ac}\lbrace{\Exp[R(\mathbf{x},\mydelta)] + \gamma \sum_{\mathbf{x}'\in\Ac}P_{\mathbf{x}\mathbf{x}'}(\delta) V^*(\mathbf{x}')\rbrace}$. 
The goal in Q-learning is to approximate a function $\map{Q^*}{\Sc\times \Ac}{\realpos}$ which when optimized over the actions yields the optimal policy, i.e. $V^{\tilde\Dc^*}=V^*$, with 
\begin{equation}\label{eq:optimal_deadline}
    \tilde\Dc^*(\mathbf{x}):=\argmax_{\mydelta\in\Ac}Q^*(\mathbf{x},\mydelta).
\end{equation}
Furthermore, the optimal $Q^*$ satisfies $Q^*(\mathbf{x},\mydelta)=\Exp[R(\mathbf{x},\mydelta)]+\gamma \sum_{\mathbf{x}'\in\Ac}P_{\mathbf{x}\mathbf{x}'}(\delta) V^*(\mathbf{x}')$.

In $Q$-learning, the optimal $Q^*$ is approximated iteratively through experimentation. Starting from a $Q_0$ arbitrarily initialized, after each observed transition the approximation of $Q^*$ is adjusted employing the observed reward following:
\begin{equation}\label{eq:Q-update}
Q_{i+1}(\mathbf{x},\mydelta) = (1-\alpha_i)Q(\mathbf{x},\mydelta) + \alpha_i\big(R(\mathbf{x},\mydelta)+\gamma V_i(\mathbf{x}')\big)
\end{equation}
where $\mathbf{x}'=G(\mathbf{x},\mydelta)+\mathbf{e}$ (the observed next state) $V_i(\mathbf{x}')=\max_{d}Q_i(\mathbf{x}',d)$, and $\alpha_i\in(0,1)$ is the (possibly time varying) learning rate. Under certain conditions of boundedness of rewards and quadratic sum convergence for the rates $\alpha_j$ iterating $Q$ according to \eqref{eq:Q-update} leads to the convergence~\cite{PD-CJCHW:92} to the optimal $Q^*$, $\lim_{j\to\infty}Q_j\rightarrow Q^*$.
Then, the ``optimal" deadline policy~$\tilde \Dc^*$ can be recovered via the relationship~\eqref{eq:optimal_deadline}. 

The $Q$-learning algorithm yields an approximation to the optimal deadline policy for our original problem, as we employ a discretized model approximation, and in practice, the algorithm must be terminated after a finite number steps.

\begin{remark}\longthmtitle{Safe online learning}
One of the strong features of our framework is that the learning can be safely implemented online because Proposition~\ref{prop:obj} guarantees $\Xi(\cdot)>0$ regardless of the deadline policy $\Dc_j$ being used. This opens many doors for using transfer learning approaches that allow the users to train a policy offline with a reduced model and then adapt online the policy to the real dynamics.
~\hfill$\bullet$
\end{remark}

We propose Algorithm~\ref{alg:updateQ} for $Q$-learning specific to our deadline policy, in order to speed up the learning process. The idea is based on the fact that when an action $\mydelta$ is taken at $\mathbf{x}$, we can also observe along the trajectory the rewards $R(\mathbf{x},\mydelta')$ and the states $\mathbf{x}^{\mydelta'}_{i+1}=\mathbf{x}(t_i+\mydelta')$ for actions $\mydelta'<\mydelta$. In addition, if the event is dictated by $\Xi$ rather than the deadline condition, we can also record the same rewards for $\mydelta'>\mydelta$. Thus, we may update the function $Q(\mathbf{x}_i,\mydelta')$  for an array of actions $\mydelta'$ for the state $\mathbf{x}_i$ with each triggering event, even when only one action~$\delta$ is actually taken.
\begin{algorithm}
\caption{Updating $Q$ for multiple actions per trigger in deadline learning}
\label{alg:updateQ}
 \hspace*{\algorithmicindent} \textbf{Input:} $Q_i$, $t_i$, $t_{i+1}$, \mydelta, $[t_i,t_{i+1}]\mapsto x(t)$\\
 \hspace*{\algorithmicindent} \textbf{Output:} $Q_{i+1}$ 
\begin{algorithmic}[1]
    \For{$\mydelta'\in\Ac$ and $\mydelta' \leq t_{i+1}-t_i$}
        \State $R\gets \mydelta'$ 
        \State $V \gets \max_{a\in\Ac} Q(x(t_i+\mydelta),a)$
        \State $Q_{i+1}(\mathbf{x},\mydelta') \gets(1-\alpha)Q_i(\mathbf{x},\mydelta')+\alpha(R+\gamma V)$
    \EndFor
    \If{$t_{i+1}-t_i< \mydelta$}
    \For{$\mydelta'\in\Ac$ and $\mydelta'> t_{i+1}-t_i$}
        \State $R\gets t_{i+1}-t_i$
        \State $V \gets \max_{a\in\Ac} Q(x(t_{i+1}),a)$
        \State $Q_{i+1}(\mathbf{x},\mydelta') \gets(1-\alpha)Q_i(\mathbf{x},\mydelta')+\alpha(R+\gamma V)$
    \EndFor
    \EndIf
\end{algorithmic}
\end{algorithm}

\section{Application to Satellite Orbiting Control}
We demonstrate our hierarchical framework with a satellite orbit safety problem.

\subsection{An orbit safety problem}
The satellite's dynamics are given by the Newton's gravitational model:
$$
\frac{d}{dt}\begin{bmatrix}\vec{r}\\ \vec{v}\end{bmatrix} = \begin{bmatrix}\vec{v}\\- \frac{\mu}{r^3} \vec{r}\end{bmatrix}+\begin{bmatrix}
    0\\ d
\end{bmatrix},
$$
where the states $\vec{r}(t)\in\real^3$ and $\vec{v}(t)\in\real^3$ are the satellite's position and velocity, respectively and $\mu>0$ is the \textit{gravitational parameter} of the central body of the orbit. We use the short hand notation $r=\|\vec{r}\|$ for the satellite's orbital radius. Due to the disturbance $d(t)\in\real^3$ in the acceleration, the satellite may drift away from a safe orbit. In response, it can actuate its thruster to change its velocity. The effect of the thruster is typically modelled as an impulse due to the difference in the timescale between the thruster actuation and satellite free-flow orbit (seconds vs. days):
$$
\begin{bmatrix}\vec{r}\\ \vec{v}\end{bmatrix}^+ = \begin{bmatrix}\vec{r}\\ \vec{v}\end{bmatrix}+ \begin{bmatrix}0\\ \Delta \vec{v}\end{bmatrix}.
$$
where $\Delta \vec{v}\in\real^3$ is the control input. 

We consider a satellite orbiting the asteroid 25143 Itokawa. The satellite is subjected to disturbances from the higher-order gravity field (state-dependent model uncertainty) and the effect of the central body rotating on its axis (time-varying disturbance). Our control objective is to maintain the satellite within safe radius range,  $1.6R<r<2.4R$ where $R$ is the asteroid's mean body radius.

\subsection{Safety layer setup}
We use a barrier function $h(r) = (0.4R)^2 - (r-2R)^2$ to define the set of safe states as:
\begin{align*}
    \Cc &= \setdefb{ (\vec{r},\vec{v}) \in \real^3\times \real^3}{h(r) \geq 0}. 
\end{align*}
The barrier condition~\cite{ADA-SC-ME-GN-KS-PT:19} for safety is then given by:
$$
\underbrace{-2(r-0.4R)(\vec{r}\cdot\vec{v})/r}_{= \Lie_fh(\vec{r},\vec{v})} -\left\|\left. \frac{\partial h}{\partial x} \right|_{x}d\right\|\geq -\alpha(h(r))
$$
where we select $\alpha(z) = z/600$. Furthermore, we assume the term with the disturbance is bounded, and we propose the nominal triggering condition as:
$$
\Xi(\vec{r},\vec{v}) = -1200(r-0.4R)(\vec{r}\cdot\vec{v})/r + h(r) - 0.0005 
$$

In regard to the impulse feedback control policy $\Delta v = \Pi(\vec{r},\vec{v})$, we directly calculate the desired $\vec{v}^+$ using orbital mechanics as follows. First, we calculate a ``safety factor" $S = \frac{r-2R}{0.4R}$. We use it to compute a target radius of the \textit{peri/apoapsis} of a new orbit that the satellite will be injected into:
$$
r_\tgt = 2R-(0.2R)S=3R-0.5r.
$$
Similarly, we select a \textit{true anomaly} for the satellite in the new orbit according to:
$$
\nu = 
\begin{cases}
     -\pi + \pi S,& S>0 \\
     -(\pi/2)S, & S\leq 0
\end{cases}
$$
The reason to pick the target radius and the \textit{true anomaly} with a bias based on $S$ is so that the jump will result in a positive $\Lie_fh(\vec{r},\vec{v})$. In turn, the barrier condition is satisfied strictly, ruling out Zeno behavior. Based on $r_\tgt$ and $\nu$, the \textit{eccentricity} and the \textit{semi-latus rectum} can be computed:
\begin{align*}
    e &= (r-r_\tgt)(\sgn(S) r_\tgt-r\cos\nu)\\
    p &= r_\tgt(1+\sgn(S) e).
\end{align*}
Then the desired velocity is given by:
$$
\vec{v}^+ = \sqrt{\mu/p} (-\sin\nu\vec{P}+(e+\cos\nu)\vec{Q})
$$
where $\vec{P},\vec{Q}$ are the basis vector of the \textit{perifocal coordinate system} \cite[Sec. 2.2.4]{RRB-DDM-JEW:71}. Note that $\Delta v = \vec{v}^+-\vec{v}$.

\subsection{Optimization layer setup}
Now that we have the safety layer enforcing the control safety objective, we can focus on resource conservation via the optimization layer. Here we are concerned with reducing the number of thruster usages because they may interfere with the satellite's mission or generate unnecessary heat. We note that we do not yet consider in this paper the total propellant uses from applying $\Delta v$, but reducing the frequency of actuation is nevertheless a good proxy for it. Future work will include propellant usage in the optimization.

\begin{remark}\longthmtitle{Reward function and actions}
    The reward function considered in this paper only depends on the inter-event times. In the future, we want to incorporate the state $x$ and the impulses $v$. This would allow us to pose richer optimal control problems, as well as including the impulsive control input $v$ in the optimization.~\hfill$\bullet$ 
\end{remark}
For the learning of the deadline policy, we discretize the states into 400 buckets based solely on the radius $r$ and the actions into 10000 different deadlines, ranging from 50 seconds to 100 hours, spaced exponentially. We update the $Q$ function over 180 generations of 100 episodes. Each episode simulates a trajectory from a random initial condition until 20 events take place. Fig.~\ref{fig:DIET-log} shows the improvement of DIET over generations. We note that because the plot displays the DIET in logarithmic scale, the improvement in the observed maximum DIET over generation is downplayed and the decrease in minimum DIET observed is exaggerated. 

\subsection{Results}

To test the efficiency of our learned deadline policy, we ran simulations of the closed-loop system from 100 different initial conditions. For each initial condition, we recorded the DIET over 50 events, for both the traditional ETC (greedy) policy and the learned policy. We report the average (across the 100 trajectories) of the DIET to be 2653 and 4422 hours per trigger, respectively, suggesting an improvement of roughly 1.7 times overall. 

We plot the learned deadline policy for each of the 400 state buckets in Fig. \ref{fig:deadline_policy}, and we notice that the deadlines differ drastically from the greedy approach (using the maximum deadline) for radii closer to the boundary of the safe set. We ran 100 simulations for a fixed initial radius of 2.3R where the deadline policy drastically change. We report the average of the DIET as 110 and 4644 hours for the greedy policy and the learned policy, respectively. Here, the improvement is staggeringly by the factor of 42. We conclude that the proposed hierarchy can provide great improvement in specific regions.
The plots of 10 system trajectories, cf. Fig.~\ref{fig:learned_behavior}, shows that system safety is satisfied. The learned policy has learned that the inter-event time for states away from the boundary are naturally longer, so it takes advantage by triggering early than safety requires, in order to dwell at the states away from the boundary.

\begin{figure}
    \centering
    \includegraphics[width=\columnwidth]{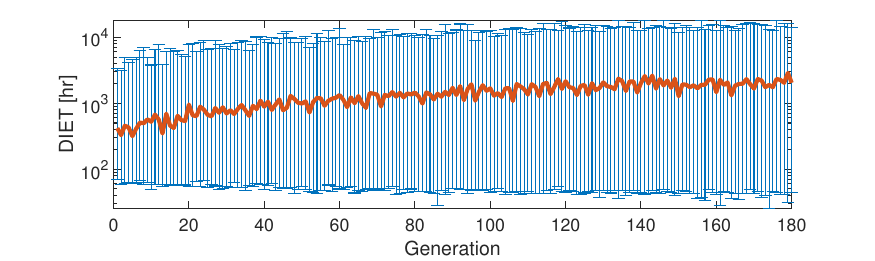}
    \caption{Improvement of DIET over generations of learning. For each generation, we plot the average (red) of DIET of the 100 trajectories within the generation, and the error bar (blue) encompasses the range of DIET observed. The DIET axis is displayed in logarithmic scale to also reveal the worst-case.}
    \label{fig:DIET-log}
\end{figure}

\begin{figure}
    \centering
    \includegraphics[width=\columnwidth]{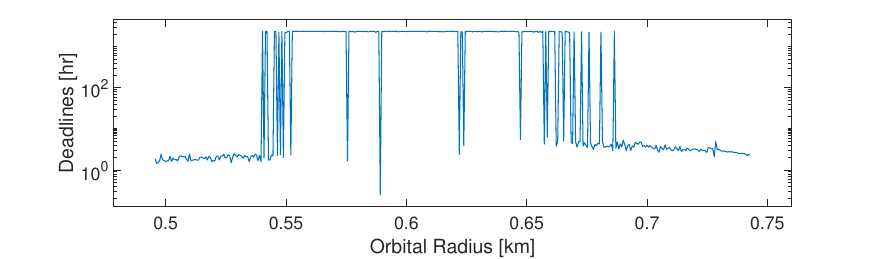}
    \caption{Learned deadline policy. The plot shows the deadline values for each of the 400 state buckets. For radii close to the boundary of the safe set, the learned deadline is much lower than 100 hours, which is used by the traditional ETC (greedy) policy.}
    \label{fig:deadline_policy}
\end{figure}
\begin{figure}
    \centering
    \includegraphics[width=\columnwidth]{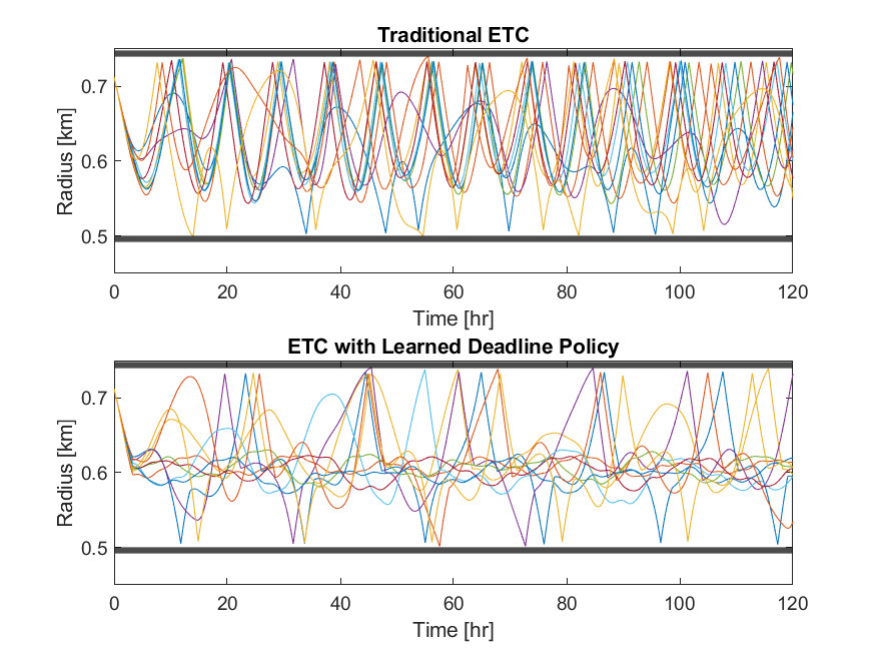}
    \caption{Ten trajectory comparisons between the greedy deadline policy (top) and learned deadline policy (bottom). In both cases, the underlying ETC strictly enforces safety. However, the learned policy intelligently triggers in the way that trajectories dwell in the region where inter-event times are longer, leading to an overall increase in DIET. }
    \label{fig:learned_behavior}
\end{figure}

\section{Conclusion}

We have presented a hierarchical framework for optimizing ETC trigger designs for less resources' spending. The framework allows the users to employ any available ETC design, retaining its corresponding guarantees. Furthermore, our proposal decouples the control objectives, guaranteed by the ETC design, from the optimization of long-term inter-event times' metrics via deadline policies. To address the latter, an optimal control problem over abstracted dynamics, not explicitly constructed, is proposed and proposed to be solved via model-free methods, as e.g., RL. In addition, we exploited the structure of deadline generation in a RL implementation with improved learning speed.
Future work includes exploration of different learning algorithms, including online transfer learning, richer objective functions in the higher layer optimal control problem, and the incorporation
of stochastic noise in the framework.

\end{document}